\theoremstyle{definition}
\newtheorem{definition}{Definition}
\theoremstyle{remark}
\newtheorem{theorem}{Theorem}
\newtheorem{lemma}[theorem]{Lemma}
\begin{document}

\begin{frontmatter}


\title{Tight Lower Bound for Average Number of Terms in Optimal Double-base Number System}


\author[rvt]{Vorapong Suppakitpaisarn\fnref{fn1}}
\ead{vorapong@is.s.u-tokyo.ac.jp}
\address[rvt]{Graduate School of Information Science and Technology, The University of Tokyo}

\begin{abstract}
We show in this note that the average number of terms in the optimal double-base number system is in $\Omega(n / \log n)$. The lower bound matches the upper bound shown earlier by Dimitrov, Imbert, and Mishra (Math. of Comp. 2008).
\end{abstract}

\begin{keyword}
Double-base Number System \sep Asymptotic Analysis 


\end{keyword}

\end{frontmatter}


\section{Introduction and Notations}
\label{S:1}

Given a non-negative integer $m$. A tuple $\left[k, X = \langle x_i \rangle_{i = 1}^{k}, Y = \langle y_i \rangle_{i = 1}^k\right]$ where $k, x_i, y_i$ are non-negative integers is a representation of $m$ in double-base number system if $\sum\limits_{i = 1}^k 2^{x_i}3^{y_i} = m$. For some $m$, there might be more than one representations of $m$ in the system. For example, both $\left[2, \langle 0,2 \rangle, \langle 1,1 \rangle\right]$ and $\left[4, \langle 0,1,2,3 \rangle, \langle 0,0,0,0 \rangle\right]$ are representations of $15$. 

For a representation $\left[k, X = \langle x_i \rangle_{i = 1}^{k}, Y = \langle y_i \rangle_{i = 1}^k\right]$, we call $k$ as the number of terms of the representation. A representation with $k$ terms is a minimum representation of $m$ if there is no representation of $m$ with $k'$ terms such as $k' < k$. The number of terms of a minimum representation is denoted by $k^*_m$ in this paper.

To speed up the calculation of scalar multiplication in elliptic curve cryptography, many researchers devise algorithms to calculate a representation with small number of terms for a given $m$. Those include the algorithm by Dimitrov, Imbert, and Mishra \cite{dimitrov2008double}. Let $k'_m$ be the number of terms of representations obtained from the algorithm. The authors have shown that $k'_m \in O(\lg m / \lg \lg m)$ for all $m$. The result is surprising as, in all variations of binary representations, there are infinite number of $m$ of which all representations have $\Omega(\log m)$ terms \cite{suppakitpaisarn2014worst}. It leads to a smaller asymptotic complexity for calculating the scalar multiplication.

One may ask if the asymptotic complexity can be even smaller with the double-base number system. However, in \cite{krenn2020minimal}, we have shown that a smaller asymptotic complexity cannot be obtained. There are infinite number of $m$ of which $k_m^* \in \Omega(\lg m / \lg \lg m)$. 

Until now, we have discuss the worst-case computation time of the scalar multiplication.
However, in literature, it is more common to analyze the average-case computation time than to analyze the worst-case computation time \cite{avanzi2002multi, dahmen2005advanced,imai2015improving}. The average-case computation time usually depends on $\mathcal{A}'(n) := \sum\limits_{m = 0}^{2^n - 1} k_m' / 2^n$ when $n$ is a positive integer and $k_m'$ is the number of terms in the representation obtained from an algorithm. When the algorithm is that proposed in \cite{dimitrov2008double}, we have 
$$\mathcal{A}'(n) = \sum\limits_{m = 0}^{2^n - 1} \frac{k_m'}{2^n} = \sum\limits_{m = 0}^{2^n - 1} O(\frac{\lg m}{\lg \lg m}) / {2^n} = \sum\limits_{m = 0}^{2^n - 1} O(n / \lg n) / {2^n} = O(n / \lg n).$$
Let $\mathcal{A}^*(n) := \sum\limits_{m = 0}^{2^n - 1} k_m^* / 2^n$. We know that $\mathcal{A}^*(n) \leq \mathcal{A}'(n) = O(n / \lg n)$.

Although it is known that the worst-case asymptotic complexity using double base number system cannot be further improved, we may be able to further improve the average-case asymptotic complexity. We have tried to improve our algorithm and our analysis to have $\mathcal{A}'(n) \in o(n / \lg n)$, but have not been successful.

\section{Our Result}

We show that the asymptotic of $\mathcal{A}'(n)$ cannot be further improved, i.e. $\mathcal{A}^*(n) \in \Omega(n / \lg n)$. This implies that the upper bound of $\mathcal{A}'(n)$ is asymptotically tight, and the algorithm in \cite{dimitrov2008double} is asymptotically optimal also on average case.

We will use the concept of prefix code and its properties to show the tightness. The prefix code can be defined as follows:

\begin{definition}[Prefix code] Let $N$ be a positive integer and, for $i \in \{0, \dots, N-1\}$, let $c_i \in \{0, 1\}^*$. We say that $c_0, \dots, c_{N-1}$ is a prefix code of $0, \dots, N-1$ if $c_i$ is not a prefix of $c_j$ for $i \neq j$. 
\end{definition}

We can define an optimal prefix code as follows:

\begin{definition}[Optimal prefix code] Let $c_0, \dots, c_{N-1}$ be a prefix code of $0,\dots, N-1$, let $p_0, \dots, p_{N-1}$ be a probability distribution on $0, \dots, N-1$, and let $|c|$ be the length of $c \in \{0, 1\}^*$

We say that $c_0, \dots, c_{N-1}$ is an optimal prefix code under the probability distribution $p_0, \dots, p_{N-1}$ if there is no prefix code $c'_0, \dots, c'_{N-1}$ of $0, \dots, N-1$ such that $\sum\limits_{i = 0}^{N - 1} |c'_i| p_i < \sum\limits_{i = 0}^{N - 1} |c_i| p_i$.
\end{definition}

We will use an algorithm in \cite{huffman1952method} to show our result. The algorithm takes a probability distribution of $0, \dots, N-1$ and gives a prefix code $c_0, \dots, c_{N-1}$ as an output. The code obtained from the algorithm is called Huffman code. The author of the paper has shown the following lemma.  

\begin{lemma} [Optimality of Huffman code \cite{huffman1952method}] 
The output code obtained from the algorithm in \cite{huffman1952method} is an optimal prefix code under the input probability distribution. 
\end{lemma}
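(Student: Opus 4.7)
The plan is to prove the optimality of Huffman's algorithm by strong induction on the alphabet size $N$. The base case $N = 2$ is immediate: any prefix code of two symbols must use codewords of length at least $1$, and the assignment $c_0 = 0$, $c_1 = 1$ achieves expected length $1$ and is exactly what Huffman's procedure outputs.

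For the inductive step, I would first establish a structural lemma: among all optimal prefix codes under $p_0, \dots, p_{N-1}$ (assume after relabeling that $p_{N-2}$ and $p_{N-1}$ are the two smallest probabilities), there exists one in which symbols $N-2$ and $N-1$ are assigned codewords of maximum length that differ only in their final bit, i.e.\ siblings in the underlying binary tree. This is proved by two exchange arguments. First, in any optimal code every deepest codeword must have a sibling; otherwise one could delete its last bit, obtain a shorter prefix code, and strictly decrease the expected length. Second, if a symbol of low probability is placed at a shallower depth than a symbol of higher probability at maximum depth, swapping their codewords can only decrease (and never increase) the expected length. Applying these swaps moves $N-2$ and $N-1$ into a deepest sibling pair without increasing cost.

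Now I would exploit this structure to reduce to $N - 1$ symbols. Let $c_{N-2} = w0$ and $c_{N-1} = w1$ be the sibling codewords in the assumed optimal code, and consider the reduced alphabet with probabilities $p_0, \dots, p_{N-3}, p_{N-2} + p_{N-1}$, where the merged symbol is assigned codeword $w$. A direct calculation shows that for any prefix code on the full alphabet in which $N-2$ and $N-1$ are siblings with parent $w$,
$$\sum_{i=0}^{N-1} |c_i| p_i = \Bigl(\sum_{i=0}^{N-3} |c_i| p_i + |w|(p_{N-2} + p_{N-1})\Bigr) + (p_{N-2} + p_{N-1}),$$
so minimizing the left-hand side over such codes is equivalent to minimizing the expected length on the reduced alphabet. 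Since Huffman's algorithm is defined to first merge the two least probable symbols and then recursively build a code on the reduced distribution, the inductive hypothesis yields optimality for the reduced problem, hence optimality of Huffman's output for $N$ symbols.

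The main obstacle is the structural lemma: one must argue carefully that the two exchange operations (lifting a lone deepest codeword, and swapping a low-probability shallow symbol with a high-probability deep symbol) preserve the prefix property and never strictly increase expected length, so that an optimal code with the desired sibling structure always exists. Once this is in place, the inductive reduction is a short arithmetic manipulation and matches the recursive construction in \cite{huffman1952method}.
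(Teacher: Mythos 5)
Your proof outline is correct and is essentially the classical argument: the paper itself offers no proof of this lemma, simply importing it from Huffman's 1952 paper, and your induction on $N$ with the two exchange arguments (lifting a lone deepest leaf, and swapping a low-probability shallow symbol with a high-probability deep one) followed by the merge-and-reduce step is the standard rendition of Huffman's own optimality argument. The only details worth spelling out in a full write-up are that deleting the last bit of a sibling-less deepest codeword preserves the prefix property, and that the reduction step must be run in both directions (an optimal reduced code expands to an optimal full code \emph{and} a hypothetically better full code would reduce to a better reduced code), but both are routine.
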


By Lemma 1, we can obtain the following lemma.

\begin{lemma}
Suppose that $N$ can be written in the form of $2^n$ for some $n \in \mathbb{Z}_{>0}$. Then, there is no prefix code $c_0, \dots, c_{N-1}$ such that $\sum\limits_{i = 0}^{N-1} \frac{|c_i|}{N} < n$.
\end{lemma}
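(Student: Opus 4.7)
The plan is to apply Lemma 1 to the uniform probability distribution $p_0 = \cdots = p_{N-1} = 1/N$ on $\{0, \dots, N-1\}$. Under this distribution, $\sum_{i=0}^{N-1}|c_i| p_i$ coincides exactly with $\sum_{i=0}^{N-1}|c_i|/N$, so the claim reduces to showing that the optimal expected codeword length for the uniform distribution on $N = 2^n$ outcomes is at least $n$.

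To pin down that optimal value, I will trace Huffman's algorithm on this symmetric input. Since all weights equal $1/N$, one valid execution of the algorithm pairs, in its first round, the $2^n$ leaves into $2^{n-1}$ internal nodes each of weight $2/N$; the second round pairs these into $2^{n-2}$ nodes of weight $4/N$; and after $n$ such rounds only the root remains. A short induction on $n$ shows that the resulting tree is the complete binary tree of depth $n$, so every input item is a leaf at depth exactly $n$. Hence the Huffman code has expected codeword length $n$, and by Lemma 1 no prefix code can achieve a strictly smaller expected length, giving $\sum_{i=0}^{N-1}|c_i|/N \geq n$ as required.

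The only mildly delicate point is that Huffman's algorithm is nondeterministic when many nodes share the minimum weight, so one must check that the balanced pairing described above is actually a legal run. This is routine: at each round every remaining node has the same weight, so pairing them $(1,2), (3,4), \dots$ in any fixed order is consistent with the rule ``pick two minimum-weight nodes''. Thus the substantive content of the argument is really the symmetry observation that forces Huffman on a uniform distribution to collapse into the balanced $n$-bit code; Lemma 1 then transfers this optimum to a lower bound valid for every prefix code.
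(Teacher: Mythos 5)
Your proposal is correct and follows essentially the same route as the paper: apply the optimality of the Huffman code (Lemma~1) to the uniform distribution on $N = 2^n$ symbols, observe that the algorithm produces the balanced code with $|c_i| = n$ for all $i$, and conclude that no prefix code can beat expected length $n$. You supply a bit more detail than the paper on why a legal run of Huffman's algorithm yields the complete binary tree, but the substance of the argument is identical.
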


\begin{proof}
We omit the description of the algorithm of \cite{huffman1952method} in this paper. However, when $N = 2^n$ for some $n$, $p_i = 1/N$ for all $i \in \{0, \dots, N-1\}$, and $c_0, \dots, c_{N-1}$ is a prefix code obtained from the algorithm, we know that $|c_i| = n$ for all $i$. By the optimality of $c_1, \dots, c_{N-1}$, there is no prefix code $c_1', \dots, c_{N-1}'$ such that $\sum\limits_{i = 0}^{N - 1} |c'_i| p_i < \sum\limits_{i = 0}^{N - 1} |c_i| p_i$ or $\sum\limits_{i = 0}^{N - 1} |c'_i|/N < \sum\limits_{i = 0}^{N - 1} |c_i|/N = n$.
\end{proof}

Let a tuple $\left[k_m^*, X^* = \langle x^*_i \rangle_{i = 1}^{k_m^*}, Y^* = \langle y^*_i \rangle_{i = 1}^{k_m^*}\right]$ be a minimum representation of $m$ in double-base number system. Recall from Section 1 that $\sum\limits_{i = 1}^{k_m^*} 2^{x^*_i}3^{y^*_i} = m$. We know that, for all $i$, $x^*_i, y^*_i \leq \lg m$, otherwise the term $2^{x^*_i}3^{x^*_i}$ and the summation $\sum\limits_{i = 1}^k 2^{x^*_i}3^{y^*_i}$ would be larger than $m$. Therefore, we can represent $x^*_i$ and $y^*_i$ using a binary representation length $\lg n$ when $m < 2^n$. Also, we know that the number of terms in the minimum representation would not be larger than $n$ when $m < 2^n$. We can represent $k_m^*$ using a $(\lg n)$-bit binary representation. 

Let $b[x] \in \{0,1\}^{\lg n}$ be a $(\lg n)$-bit binary representation of a non-negative integer $x \in \{0, \dots, n - 1\}$. Note that the length of $b[x]$ is fixed to $\lg n$ and the bit string $b[x]$ begins with $0$ when $x < 2^{n - 1}$. We define the code $c''_0, \dots, c''_{2^n - 1}$ of $0, \dots, 2^n - 1$ in the way that, for all $m$, 
$$c''_m = b[k_m^*]b[x_1^*]b[y_1^*] \dots b[x_{k_m^*}^*] b[y_{k_m^*}^*].$$  
We can show the following lemma for the code.

\begin{lemma}
The code $c''_0, \dots, c''_{2^n - 1}$ of $0, \dots, 2^n - 1$ defined above is a prefix code.
\end{lemma}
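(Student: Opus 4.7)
The plan is to exploit the self-delimiting structure of $c''_m$: because every block $b[\cdot]$ has the exact same length $\lg n$, reading the first block of any codeword recovers $k_m^*$, and from $k_m^*$ one immediately knows the total length of the codeword, namely $(1 + 2 k_m^*)\lg n$. This is the standard trick for turning a variable-length description into a prefix-free one, and it should close the argument in a few lines.

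First I would suppose, towards a contradiction, that $c''_m$ is a prefix of $c''_{m'}$ for some $m \neq m'$ in $\{0, \dots, 2^n - 1\}$. Since $|c''_m| \geq \lg n$ and $|c''_{m'}| \geq \lg n$ (the $b[k^*]$ block is always present), the first $\lg n$ bits of the two codewords must agree, i.e.\ $b[k_m^*] = b[k_{m'}^*]$. Because $b$ is injective on $\{0,\dots,n-1\}$ and both $k_m^*, k_{m'}^* \leq n$, this forces $k_m^* = k_{m'}^*$. Consequently $|c''_m| = |c''_{m'}| = (1 + 2 k_m^*)\lg n$, so the prefix relation collapses to outright equality $c''_m = c''_{m'}$.

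Next I would parse this equality blockwise. Because all blocks have length exactly $\lg n$, the factorization of $c''_m$ into $b[k_m^*], b[x_1^*], b[y_1^*], \dots, b[x_{k_m^*}^*], b[y_{k_m^*}^*]$ is uniquely determined by its bit string, and injectivity of $b$ recovers the integers $k_m^*, x_i^*, y_i^*$ from the blocks. Hence $m$ and $m'$ admit the \emph{same} minimum double-base representation $[k_m^*, X^*, Y^*]$, which forces $m = \sum_{i=1}^{k_m^*} 2^{x_i^*} 3^{y_i^*} = m'$, contradicting $m \neq m'$.

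I do not expect any genuine obstacle here; the only point that warrants care is checking that the first block really is readable without ambiguity. That follows because the encoding uses fixed-length blocks rather than a variable-length length prefix, so there is no need to invoke Kraft's inequality or any counting argument. The lemma therefore reduces to a short unique-parsing argument followed by injectivity of $b$ and the fact that a minimum representation determines its value.
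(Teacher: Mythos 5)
Your proof is correct and follows essentially the same route as the paper's: assume $c''_m$ is a prefix of $c''_{m'}$, deduce $k_m^* = k_{m'}^*$ from the first fixed-length block, conclude the two codewords have equal length and hence coincide, and then recover identical representations, contradicting $m \neq m'$. Your write-up is in fact slightly more explicit than the paper's about why the blockwise parsing is unambiguous (fixed block length plus injectivity of $b$), but the underlying argument is identical.
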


\begin{proof}
Assume a contradictory statement that there are $i \neq j$ such that $c_i''$ is a prefix of $c_j''$. We must have $b[k_i^*] = b[k_j^*]$ and $k_i^* = k_j^*$. We can then conclude that the code $c_i''$ and $c_j''$ have the same length. As $c_i''$ is a prefix of $c_j''$, we have $c_i'' = c_j''$. Thus, for $1 \leq p \leq k_m^*$, we have same $x_p^*$ and $y_p^*$ in the representations of $i$ and $j$. The integer $i$ and $j$ have the same representation in double-base number system. That is not possible when $i \neq j$.   
\end{proof}

We are now ready to show our main result.

\begin{theorem}
The average number of terms in the double-base number system, denoted by $\mathcal{A}^*(n)$, is in $\Omega(n / \lg n)$.
\end{theorem}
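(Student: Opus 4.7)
The plan is to combine the prefix code $c''_0,\dots,c''_{2^n-1}$ from the construction above with the coding lower bound of Lemma 2. The key observation I would make is that the length of each codeword $c''_m$ is completely determined by $k_m^*$: since every block $b[\cdot]$ has length $\lg n$, and $c''_m$ consists of one block encoding $k_m^*$ followed by $2 k_m^*$ blocks encoding the exponents $x_i^*, y_i^*$, we have
\[
|c''_m| = (1 + 2 k_m^*)\,\lg n.
\]

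Next, I would invoke Lemma 3 together with Lemma 2 applied to $N = 2^n$. Since $c''_0,\dots,c''_{2^n-1}$ is a prefix code, Lemma 2 gives
\[
\sum_{m=0}^{2^n - 1} \frac{|c''_m|}{2^n} \;\geq\; n.
\]
Substituting the length formula and pulling $\lg n$ out of the sum yields
\[
\lg n \cdot \left( 1 + 2 \sum_{m=0}^{2^n - 1} \frac{k_m^*}{2^n}\right) \;=\; \lg n \cdot \bigl(1 + 2\,\mathcal{A}^*(n)\bigr) \;\geq\; n.
\]
Rearranging gives $\mathcal{A}^*(n) \geq (n/\lg n - 1)/2$, which is $\Omega(n/\lg n)$.

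There is no significant obstacle here; the argument is essentially an information-theoretic counting argument where all the nontrivial work has already been packaged into Lemmas 2 and 3. The only subtlety worth double-checking is the length bookkeeping, in particular verifying that (i) encoding $k_m^* \in \{0,\dots,n\}$ really fits in $\lg n$ bits (which requires treating the edge case where $k_m^*$ may equal $n$, but this only changes the constant), and (ii) the argument quantifies $n$ as a power of two (or more precisely, that we can restrict attention to such $n$ when proving the asymptotic statement), so that Lemma 2 applies directly. Both are routine, so the write-up amounts to the three-line calculation above.
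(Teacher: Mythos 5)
Your proposal is correct and follows essentially the same route as the paper: the same length formula $|c''_m| = (2k_m^*+1)\lg n$ combined with Lemmas 2 and 3, except that you derive the explicit bound $\mathcal{A}^*(n) \geq (n/\lg n - 1)/2$ directly while the paper argues by contradiction from the assumption $\mathcal{A}^*(n) \in o(n/\lg n)$. Your attention to the bookkeeping edge cases (the range of $k_m^*$ and the applicability of Lemma 2) is if anything more careful than the paper's own write-up.
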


\begin{proof}
By the construction of $c_m''$, we have $|c_m''| = (2 k_m^* + 1) \cdot \lg n$. Then, $$\sum\limits_{i = 0}^{2^n - 1} |c_i''| / 2^n = \sum\limits_{i = 0}^{2^n - 1} (2 k_m^* + 1) \cdot \lg n / 2^n = (2 \mathcal{A}^*(n)  + 1) \cdot \lg n.$$
If $\mathcal{A}^*(n) \in o(n / \lg n)$, then 
$$\sum\limits_{i = 0}^{2^n - 1} |c_i''| / 2^n = (2\cdot o(n / \lg n) + 1) \cdot \lg n = o(n).$$
There is $n$ such that $\sum\limits_{i = 0}^{2^n - 1} |c_i''| / 2^n < n$. This contradicts our result in Lemma~2.
\end{proof}

\section{Concluding Remarks}

In this paper, we have shown a tight lower bound for the double-base number system in the previous section. Indeed, we can use the same argument to show the tight lower bound for the multi-base number system. In other words, for any constant $q$ and for any $b_1, \dots, b_q$ such that $b_i$ and $b_j$ are co-prime for $i \neq j$, the average number of terms in the summation $\sum\limits_{i = 1}^k b_1^{\beta_1}, \dots b_q^{\beta_q} = m$ would not be smaller than $\Omega(n / \lg n)$ for $m \in \{0, \dots, 2^n - 1\}$. 

The result also holds for the case that digit set is not $\{0,1\}$ as far as the exponent is in $O(\log n)$. In other words, let the digit set be $D_S$, the average number of terms in the summation $\sum\limits_{i = 1}^k d_i b_1^{\beta_1}, \dots b_q^{\beta_q} = m$ for $d_i \in D_S$ is not asymptotically smaller than that without $d_i$ unless we allow $\beta_i$ to be in $\Omega(\log n)$. This implies that our algorithm for the multi-base number system in \cite{krenn2020minimal} is tight also on the average case.

\section*{Acknowledgement}

The author would like to thank Prof. Kazuo Iwama (Kyoto University) for suggesting us to use results from information theory in this research area.

\end{document}